\newtheorem{theorem}{Theorem}
\newtheorem{lemma}{Lemma}
\newtheorem{definition}{Definition}
\newtheorem{assumption}{Assumption}
\newtheorem{problem}{Problem}
\newcommand{\rmnum}[1]{\romannumeral #1}
\newcommand{\Rmnum}[1]{\expandafter\@slowromancap\romannumeral #1@}
\let\NAT@parse\undefined
\begin{document}
\title{
A Priority-Aware Replanning and Resequencing Framework for Coordination of Connected and Automated Vehicles}
\author{Behdad Chalaki, \IEEEmembership{Student Member, IEEE}, Andreas A. Malikopoulos, \IEEEmembership{Senior Member, IEEE}%

\thanks{This research was supported by ARPAE's NEXTCAR program under the award number DE-AR0000796. This support is gratefully acknowledged.}
\thanks{The authors are with the Department of Mechanical Engineering, University of Delaware, Newark, DE 19716 USA (emails: \texttt{\{bchalaki;andreas\}@udel.edu}).}}
\maketitle
\thispagestyle{empty}

\begin{abstract} 
Deriving optimal control strategies for coordination of connected and automated vehicles (CAVs) often requires re-evaluating the strategies in order to respond to unexpected changes in the presence of disturbances and uncertainties. In this paper, we first extend a decentralized framework that we developed earlier for  coordination of CAVs at a signal-free intersection to incorporate replanning. Then, we further enhance the framework by introducing a priority-aware resequencing mechanism which designates the order of decision making of CAVs based on theory from the job-shop scheduling problem. Our enhanced framework relaxes the first-come-first-serve decision order which has been used extensively in these problems. We illustrate the effectiveness of our proposed approach through numerical simulations.
\end{abstract}

\begin{IEEEkeywords}
connected and automated vehicles, replanning, resequencing, sequential decision making
\end{IEEEkeywords}

\section{Introduction}
\IEEEPARstart{S}{everal} research efforts in the literature have considered a two-level optimization framework for coordination of connected and automated vehicles (CAVs) at traffic bottlenecks.
An \emph{upper-level} optimization yields, for each CAV, the optimal time to exit the control zone, while a \emph{low-level} optimization yields for the CAV its optimal control input (acceleration/deceleration) to achieve the optimal time derived in the upper-level subject to the state, control, and safety constraints. 
There have been several approaches in the literature to solve the upper-level optimization problem, including first-in-first-out (FIFO) queuing policy 
\cite{Malikopoulos2017,Rios-Torres2017}, heuristic  Monte Carlo tree search methods \cite{xu2019cooperative,xu2020bi}, centralized optimization techniques \cite{guney2020scheduling,hult2018optimal}, and job-shop scheduling \cite{chalaki2020TITS, fayazi2018mixed}.
Given the solution of the upper-level optimization problem, the constrained optimal control problem is solved sequentially in the low-level optimization, yielding the optimal control input for each CAV. 
To solve the low-level optimization problem, research efforts have  used  optimal control techniques to derive the closed form solutions \cite{Malikopoulos2017,chalaki2020TCST,Malikopoulos2020,zhang2019decentralized}, or model predictive control (MPC) \cite{hult2018optimal,kim2014mpc,campos2014cooperative,kloock2019distributed}.  

To the best of our knowledge, there have been limited studies in exploring the effects of decision-making sequence in the low-level optimization problem. 
Campos et al. \cite{de2013autonomous} presented a heuristic approach to find a decision order for CAVs at an intersection based on their time to reach an unsafe set, i.e., the CAV can no longer stop before the intersection. Alrifaee et al. \cite{alrifaee2016coordinated} proposed a graph-based approach to construct levels of parallelizable agents for non-cooperative decentralized MPC, in which agents on the same level solve the problem in parallel, sequentially after agents on the previous level. Xiao and Cassandras \cite{xiao2020decentralized}
 relaxed the FIFO queuing policy by formulating a resequencing problem before the control zone. The authors assumed that after a CAV performs the resequencing, its speed remains constant until it arrives at the control zone.  

In this paper, we build upon the framework introduced in \cite{Malikopoulos2020} consisting of a single optimization level aimed at both minimizing energy consumption and improving the traffic throughput. Using the proposed framework, each CAV computes the optimal exit time corresponding to an unconstrained energy optimal trajectory which satisfies all the state, control, and safety constraints. We extend this work by integrating the replanning mechanism into the framework. Since unexpected changes in the presence of disturbances and uncertainties can result in deviations from the optimal planned trajectory of the CAVs, the replanning mechanism introduces feedback in the planning which can respond to these changes in the system to some extent. In addition, using the theory from the job-shop scheduling problem, we further enhance the framework by introducing a priority-aware resequencing mechanism to find the decision making sequence of the CAVs based on the minimum exit time from the traffic bottleneck. 

The work that we report on this paper advances the state of the art in a way that relaxes the first-come-first-serve (FCFS) decision making sequence of the CAVs. The contributions of this paper are: (\rmnum{1})  the introduction of replanning as a feedback mechanism to handle uncertainties or disturbances, and (\rmnum{2}) the development of a priority-aware resequencing mechanism for the coordination of CAVs.

The remainder of the paper is structured as follows. 
In Section \ref{sec:pf}, we introduce the modeling framework and we present the priority-aware resequencing mechanism in Section \ref{sec:RS}. Finally, we provide simulation results in Section \ref{sec:Results}, and concluding remarks in Section \ref{sec:Conclusion}.

\section{Modeling Framework} \label{sec:pf}
We consider a signal-free intersection (Fig. \ref{fig:intersection}), which includes a \textit{coordinator} that stores information about the intersection's geometry and CAVs' trajectories. The coordinator acts as a database for the CAVs and does not make any decision. The intersection includes a \textit{{control zone}} inside of which the CAVs can communicate with the coordinator. We call the points inside the control zone where paths of CAVs intersect and a lateral collision may occur as \textit{conflict points}. Let $\mathcal{O}\subset \mathbb{N}$ index the set of conflict points, $N(t)\in\mathbb{N}$ be the total number of CAVs inside the control zone at time $t\in\mathbb{R}_{\geq0}$, and $\mathcal{N}(t)=\{1,\ldots,N(t)\}$ be the queue that designates the order in which each CAV entered the control zone. We model the dynamics of each CAV $i\in\mathcal{N}(t)$ as a double integrator
\begin{align}
\begin{aligned}\label{eq:dynamics}
\dot{p}_i(t)=v_i(t),\\
\dot{v}_i(t)=u_i(t),
\end{aligned}\end{align}
where $p_{i}(t)\in\mathcal{P}_{i}$, $v_{i}(t)\in\mathcal{V}_{i}$, and
$u_{i}(t)\in\mathcal{U}_{i}$ denote position, speed, and control input at $t$, respectively. The sets $\mathcal{P}_{i}$,
$\mathcal{V}_{i}$, and $\mathcal{U}_{i}$, for $i\in\mathcal{N}(t),$
are compact subsets of $\mathbb{R}$. Let $t_{i}^{0}\in\mathbb{R}_{\geq 0}$ be the time that CAV $i\in\mathcal{N}(t)$
enters the control zone, and $t_{i}^{f}>t_i^0\in\mathbb{R}_{\geq 0}$ be the time that CAV $i$ exits the control zone. For each CAV $i\in\mathcal{N}(t)$, the control input and speed are bounded by 
\begin{align}
    u_{i,\min}&\leq u_i(t)\leq u_{i,\max}, \label{eq:uconstraint} \\
    0< v_{\min}&\leq v_i(t)\leq v_{\max} \label{eq:vconstraint},
\end{align}
where $u_{i,\min},u_{i,\max}$ are the minimum and maximum control inputs and $v_{\min},v_{\max}$ are the minimum and maximum speed limits, respectively. 

\begin{figure}
    \centering
\includegraphics[width=0.8\linewidth]{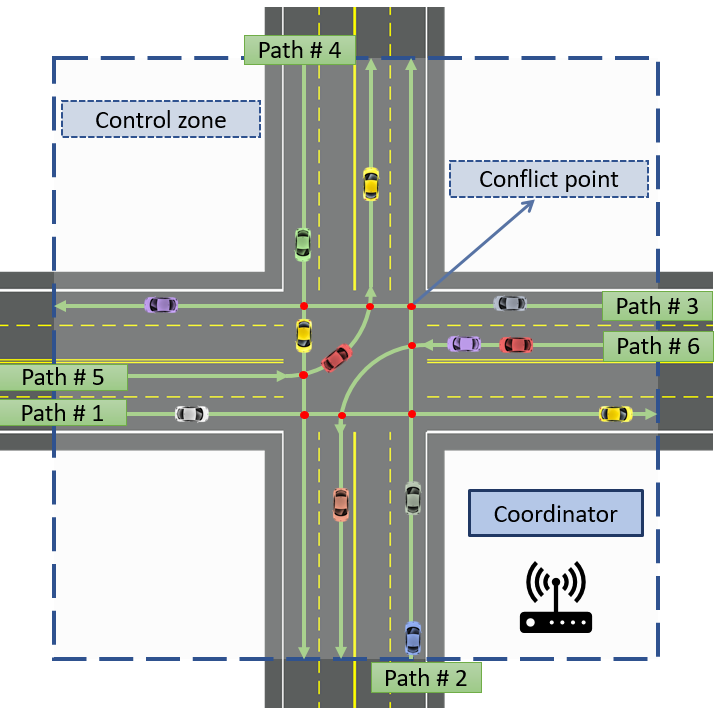}  \caption{A signal free intersection with conflict points.}
    \label{fig:intersection}
\end{figure}

We consider that CAVs do not perform any lane-change maneuver, and thus there are finite paths among which they can choose. The set of all possible paths in the control zone is given by $\mathcal{L}=\{1,\dots,z\}$, $z\in\mathbb{N}$. The path of the CAV $i\in\mathcal{N}(t)$ in the control zone is denoted by $\ell_i\in\mathcal{L}$ (Fig. \ref{fig:intersection}), and is decided a priori based on some upper-level routing problem. In our modeling framework, we make the following assumption.
\begin{assumption}\label{assumption:Paths}
Each path of a CAV cannot get either split to two paths or merged by another CAV's path. 
\end{assumption}
This assumption implies that there should be separate lanes for the turning maneuvers at the intersections. This might be a strong assumption, but it simplifies the complexity of our resequencing algorithm (formally defined next) by only considering the rear-end safety constraints for the CAVs travelling on the same path. Investigating the implications of relaxing this assumption is the subject of ongoing research.

To guarantee rear-end safety between CAV $i\in\mathcal{N}(t)$ and a preceding CAV $k\in\mathcal{N}(t)\setminus\{i\}$, we impose the following constraint,
\begin{gather}\label{eq:rearend}
 p_k(t)-p_i(t)\geq \delta_i(t) =\gamma + \varphi\cdot v_i(t),
\end{gather}
where $\delta_i(t)$ is the safe speed-dependent distance, while $\gamma$ and $\varphi\in\mathbb{R}_{>0}$ are the standstill distance and reaction time, respectively. 

Let CAV $k\in\mathcal{N}(t) \setminus \{i\}$ be a CAV that has already planned its trajectory which might cause a lateral collision with CAV $i$. 
We denote by $p_{i}^n$ and $p_{k}^n$ the distance of the conflict point $n\in\mathcal{O}$ from $i$'s and $k$'s paths' entries, respectively.
Since we do not use the FIFO queuing policy, CAV $i$ can reach at conflict point $n$ either after or before CAV $k$. 
In the first case, i.e., when CAV $i$ reaches at conflict point $n$ after CAV $k$, we have
\begin{equation} \label{eq:lateralAfter}
    p_i^n - p_i(t) \geq \delta_i(t), \quad \forall t\in[t_i^0, t_k^n],
\end{equation}
where $t_k^n$ is the known time that CAV $k$ reaches at conflict point $n$, i.e., position $p_k^n$.
In the second case, where CAV $i$ reaches at the conflict point $n$ before CAV $k$, we have 
\begin{equation} \label{eq:lateralBefore}
    p_k^n - p_k(t) \geq \delta_k(t)=\gamma + \varphi\cdot v_k(t), \quad \forall t\in[t_k^0, t_i^n],
\end{equation}
where $t_i^n$ is determined by the trajectory planned by CAV $i$.  
Since $ 0 < v_{\min} \leq v_i(t)$, the position $p_i(t)$ is a strictly increasing function.
Thus, the inverse $t_i\left(\cdot\right) = p_i^{-1}\left(\cdot\right)$ exists and it is called the \textit{time trajectory} of CAV $i$ \cite{Malikopoulos2020}. Hence, we have $t_i^n = p_i^{-1}\left(p_i^n\right)$. Therefore, for each candidate path of CAV $i$, there exists a unique time trajectory which can be evaluated at conflict point $n$, i.e.,  $t_i(p_i^n)$, to find the time that CAV $i$ reaches at conflict point $n$, i.e., $t_i^n$.

By moving all terms in \eqref{eq:lateralAfter} to the RHS, we get $\delta_i(t) + p_i(t) - p_i^n\leq 0$. Constraint \eqref{eq:lateralAfter} is satisfied, if $\max(\delta_i(t) + p_i(t) - p_i^n)\leq 0$ in the interval $[t_i^0, t_k^n]$. Likewise, if $\max(\delta_k(t) + p_k(t) - p_k^n)\leq 0$ in the interval $[t_k^0, t_i^n]$ constraint \eqref{eq:lateralBefore} is satisfied.
However, to ensure the lateral safety between CAV $i$ and CAV $k$ at conflict point $n$, either \eqref{eq:lateralAfter} or \eqref{eq:lateralBefore} must be satisfied, and thus we impose the lateral safety constraint on CAV $i$ using minimum function as
\begin{align}
    \min \Bigg\{ &\max_{t\in[t_i^0, t_k^n]} \{ \delta_i(t) + p_i(t) - p_i^n\}, \notag\\
            &\max_{t\in[t_k^0, t_i^n]} \{ \delta_k(t) + p_k(t) - p_k^n \}   \Bigg\} \leq 0. \label{eq:lateralMinSafety}
\end{align}

In our framework, each CAV $i$ communicates with the coordinator to solve a time minimization problem, which determines $t_i^f$, i.e., the time that CAV $i$ must exit the control zone. The time $t_i^f$ corresponds to the unconstrained energy optimal trajectory guaranteeing that state, control, and safety constraints are satisfied. This trajectory is communicated back to the coordinator, so that the subsequent CAVs receive this information and plan their trajectories accordingly. 
Our framework implies that the CAVs do not have to come to a full stop at the intersection, thereby conserving momentum and fuel while also improving travel time. By enforcing the unconstrained energy-optimal trajectory that guarantees the satisfaction of all the state, control, and safety constraints, we avoid inherent real-time implementation difficulties in solving a constrained optimal control and piecing constrained and unconstrained arcs together \cite{Malikopoulos2020,chalaki2020experimental}. 

We start our exposition with the unconstrained energy optimal solution of CAV $i$, which has the following form \cite{Malikopoulos2017}
\begin{align} \label{eq:optimalTrajectory}
    u_i(t) &= 6 a_i t + 2 b_i, \notag \\
    v_i(t) &= 3 a_i t^2 + 2 b_i t + c_i, \\
    p_i(t) &= a_i t^3 + b_i t^2 + c_i t + d_i, \notag
\end{align}
where $a_i, b_i, c_i, d_i$ are constants of integration. CAV $i$ must also satisfy the boundary conditions
\begin{align}
     p_i(t_i^0) &= 0,\quad  v_i(t_i^0)= v_i^0 , \label{eq:bci}\\
     p_i(t_i^f)&=p_i^f,\quad u_i(t_i^f)=0, \label{eq:bcf}
\end{align}
where $u_i(t_i^f)=0$ because the speed at the exit of the control zone is not specified \cite{bryson1975applied}.
The details of the derivation of the unconstrained solution are discussed in \cite{Malikopoulos2017}.  

One of the advantages of incorporating replanning in the framework is introducing feedback in the system.
Replanning can occur either periodically (i.e., at a period determined a priori) or be event-driven (i.e., based on an occurrence of a certain event such as the entrance of a new CAV in the control zone). All CAVs in the control zone observe their state at each replanning instance and re-solve their optimization problem, discussed next, sequentially with the new initial conditions. For CAV $i$, let $\tau\in[t_i^0,t_i^f]$ be the replanning time, and $\tilde{\mathbf{x}}_{i}(\tau)=\left[\Tilde{p}_{i}(\tau)~ \tilde{v}_{i}(\tau)\right]^\top$, be the measurement of the state at this time. The revised initial conditions for CAV $i$ at this replanning instance is given by

\begin{align}
     p_i(\tau) &= \Tilde{p}_{i}(\tau),\quad  v_i(\tau)= \Tilde{v}_{i}(\tau).  \label{eq:bciNEW}
\end{align}

\begin{definition}
The compact set $\mathcal{T}_i(\tau)=[\underline{t}_i^{f,{\tau}}, \overline{t}_i^{f,{\tau}}]$ is the set of feasible solution of CAV $i\in\mathcal{N}(t)$ for the exit time, where $\underline{t}_i^{f,{\tau}}$ and $ \overline{t}_i^{f,{\tau}}$ denotes the minimum and maximum feasible exit time computed at $\tau$. CAV $i$ can determine $\mathcal{T}_i(\tau)$ at time $\tau$ using the speed and control input constraints \eqref{eq:uconstraint}-\eqref{eq:vconstraint}, initial condition \eqref{eq:bci} or \eqref{eq:bciNEW} (depending on if $\tau=t_i^0$), and final condition \eqref{eq:bcf}. For the derivation of this compact set, refer to \cite{chalaki2020experimental}.
\end{definition}

To avoid abrupt changes in the control input and unnecessary acceleration, we revise the lower bound on exit time to use the maximum value between the earliest feasible exit time computed at $t_i^0$ (to simplify the notation denoted as $\underline{t}_i^{f}$), and the earliest feasible exit time computed at $\tau$. Thus, the feasible compact set computed at $\tau$ is given by

\begin{equation}
    \mathcal{T}_i(\tau)=\left[\max\left\{\underline{t}_i^{f},\underline{t}_i^{f,\tau}\right\}, \overline{t}_i^{f,{\tau}}\right].
\end{equation}

\begin{problem}\label{pb:timeMinRP}
Each CAV $i\in\mathcal{N}(t)$ at replanning instance $\tau$ solves the following optimization problem
\end{problem}

\begin{align}\label{eq:tif}
    &\min_{t_i^f\in \mathcal{T}_i(\tau)} t_i^f \\
    \text{subject to: }& \notag\\
    & \eqref{eq:rearend}, \eqref{eq:lateralMinSafety}, \eqref{eq:optimalTrajectory} \notag.
\end{align}

To some extent, this replanning provides  CAV a feedback mechanism to react to any uncertainties. Ongoing research analyzes the uncertainties and consider these in the planning of CAVs \cite{chalaki2021RobustGP}. 

\section{A priority-aware Resequencing}\label{sec:RS}
In our previous framework \cite{Malikopoulos2020}, upon entering the control zone, CAV $i\in\mathcal{N}(t)$ solves Problem \ref{pb:timeMinRP} at $\tau=t_i^0$ by only considering CAVs in the control zone. For the cases in which CAVs enter the control zone simultaneously, the coordinator randomly decides the decision-making order of CAVs. Namely, the order of decision making is based on the order that CAVs entered the control zone, FCFS. 
We define the \textit{decision sequence} formally as follows.

\begin{definition}
The sequential decision making of $N(t)$ CAVs is based on the decision sequence that is given by the sequence $s =(s_1,s_2,\dots,s_{N(t)})$ where $s_{n} \in\mathcal{N}(t)$, ${n}\in\{1,\dots,N(t)\}$ is the ${n}$'th CAV in the decision making process.
\end{definition}

Without a resequencing mechanism, the decision sequence of the $N(t)$ CAVs is given by $s=(1,2,3,\dots,N(t))$ which is imposed by the order the CAV enter the control zone, referred to as FCFS sequence. Note that this is different from the order that CAVs cross the intersection, which is determined by the lateral safety constraint \eqref{eq:lateralMinSafety}. Next, we introduce our resequencing framework, which designates the decision sequence at each replanning instance.

Unlike our previous framework, where CAVs only solve their optimization problem upon entering the control zone, in this enhanced framework, CAVs re-solve the optimal control problem at different instances based on new observed information. {The observed information of each CAV consists of position and speed of the CAV at the replanning instance, which then can be used as new initial conditions \eqref{eq:bciNEW} to solve Problem \ref{pb:timeMinRP}.} In this section, we introduce a priority-aware resequencing mechanism to find the sequence of decision making based on the minimum exit time from the control zone using scheduling theory.

A scheduling problem is shown by a triplet ($\alpha~|~ \beta~ |~ \gamma$), where $\alpha$ and $\beta$ fields describe the machine environment and details of the processing characteristics and constraints, respectively, while $\gamma$ field describes the objective function. In our problem, the control zone can be considered as \textit{a single machine}, while different CAVs are considered as different jobs. In our problem, we also have \textit{precedence constraint} which requires that a CAV not plan earlier than the physical CAV located in front of it, which we define formally next. 
 \begin{definition}
 The precedence constraint can be represented by a directed graph $G = (V, E)$, where $V\coloneqq \mathcal{N}(t)$ is set of all CAVs and $E\coloneqq\{(i,j)|i,j \in V, i\rightarrow j\}$ is the set of all constraints on the order of decision making. Edge $(i,j) \in {E}$  shows that CAV $i$ should plan earlier than CAV $j$.   
 \end{definition}
 
 \begin{definition}
A non-empty subgraph $G_1=(V_1,E_1)$, where $V_1\subset V$ and $E_1\subset E$ is called a \textbf{chain} if for each vertex $i \in V_1$, there exist at most a single edge $(i,j)\in E_1,j\in V_1\setminus\{i\}$.
\end{definition}

In a scheduling problem, the \textit{processing time} of a single machine on the job $i$ is denoted by $P_i$, representing the time that it takes for the machine to process job $i$. In our case, we consider the processing time of CAV $i$ at replanning instance $t$ to be equal to $\min(\mathcal{T}_i(t))$, i.e., the minimum exit time from the control zone which is independent of the decision sequence. For each job $i$, a weight $w_i\in\mathbb{R}_{>0}$ describes the importance of job $i$ relative to the other jobs in the system. We consider that the weight of each CAV is inversely proportional to the size of the compact set of its feasible solution. This potentially helps CAVs with smaller feasible space to generate their trajectory first.

Since our goal is to find the optimal decision sequence based on the minimum exit time of the CAVs, we consider the \textit{total weighted completion time} of $N(t)$ CAVs denoted by $J^s = \sum\limits_{i=1}^{N(t)} w_i~C_i^s $ as our cost function under decision sequence $s$, where $C_i^s$ is the sum of processing times of CAV $i$ and other preceding CAVs in the decision sequence $s$. For example, suppose for two CAV $i$ and $j$, we have $P_i < P_j$ and $w_i=w_j$. The cost functions for two different decision sequences $s=(i,j)$ and $s^\prime=(j,i)$ are equal to $J^s = w_i\cdot P_i + w_j \cdot (P_i + P_j)$ and $J^{s^\prime} = w_j\cdot P_j + w_i \cdot (P_j + P_i)$, respectively. It is clear that the decision sequence $s$, which prioritize CAV $i$ over CAV $j$, has a lower total cost.

Our scheduling problem is denoted by $(1~|~G~|~\sum\limits_{i=1}^{N(t)} w_i~C_i^s)$ which describes a single machine model with precedence constraint $G$, and the objective is to minimize the total weighted completion time by finding the optimal decision sequence $s$. 

\begin{lemma}
The precedence constraint's graph of CAVs crossing a single intersection given Assumption \ref{assumption:Paths} consists of multiple disjoint chains.
\end{lemma}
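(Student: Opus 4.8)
The plan is to show that the precedence graph $G=(V,E)$, built from the rear-end ordering of CAVs on shared paths, is a disjoint union of chains. The key structural fact comes from Assumption \ref{assumption:Paths}: since no path splits or merges, the set of CAVs currently in the control zone partitions naturally according to the path $\ell_i\in\mathcal{L}$ each one is travelling on. First I would argue that an edge $(i,j)\in E$ can only exist when $\ell_i=\ell_j$, because the precedence constraint encodes ``CAV $i$ is physically in front of CAV $j$,'' and the rear-end safety constraint \eqref{eq:rearend} is only imposed between CAVs on the same path; CAVs on different paths interact through the lateral constraint \eqref{eq:lateralMinSafety}, which does not impose a decision-order precedence. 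Hence $E$ decomposes as the disjoint union over paths $\ell\in\mathcal{L}$ of the edge sets $E_\ell$ among CAVs with that path, and correspondingly $G$ is the disjoint union of the subgraphs $G_\ell$ induced on each path.

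Next I would show that each $G_\ell$ is itself a chain in the sense of the definition above, i.e.\ every vertex has at most one out-edge inside $G_\ell$. Because the CAVs on path $\ell$ occupy distinct positions along a single one-dimensional lane and cannot pass one another (Assumption \ref{assumption:Paths} rules out lane changes and merges, and the positions $p_i(t)$ are strictly increasing and mutually separated by the safety gap), they admit a total order by physical position: there is a well-defined immediate predecessor relation. The precedence graph records exactly the ``plans immediately before'' relation for consecutive CAVs on the lane, so each CAV on path $\ell$ has at most one immediate follower, giving at most one out-edge; the lead CAV has none. This is precisely the chain condition.

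Finally I would assemble the pieces: $G=\bigsqcup_{\ell\in\mathcal{L}} G_\ell$ with the $G_\ell$ vertex-disjoint (a CAV is on exactly one path) and each $G_\ell$ a chain, so $G$ is a union of multiple disjoint chains, which is the claim. I expect the main obstacle to be pinning down precisely why an edge cannot cross between two paths, i.e.\ formalizing that the precedence constraint is induced solely by the rear-end constraint \eqref{eq:rearend} and not by the lateral constraint \eqref{eq:lateralMinSafety} — this rests on the modeling convention that ``$i$ plans before $j$'' means $i$ physically precedes $j$ on a common lane, together with Assumption \ref{assumption:Paths} ensuring a common lane is the only way rear-end coupling arises. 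Once that is established, the chain property within each path and the disjointness across paths are essentially immediate from the no-passing, strictly-increasing-position structure.
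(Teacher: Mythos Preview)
Your proposal is correct and follows essentially the same approach as the paper: use Assumption~\ref{assumption:Paths} to conclude that precedence edges only arise between CAVs on the same path, so $G$ decomposes as $\bigcup_{\ell\in\mathcal{L}} G_\ell$, with each $G_\ell$ a chain. The paper's proof is terser, asserting the chain structure without spelling out the at-most-one-out-edge argument you give, so your version is in fact more complete.
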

\begin{proof}
From Assumption \ref{assumption:Paths}, we do not have any merging or splitting paths, and thus the precedence constraint only exists among CAV $i$ and $j\in\mathcal{N}(t)$ on the same path $\ell\in\mathcal{L}$ such that $\ell_i =\ell_j = \ell $. Thus, among CAVs on path $\ell$ there exist a chain denoted by $G_\ell\subset G$ such that $\bigcup_{x\in\mathcal{L}}G_x = G$.
\end{proof}
\begin{definition}
A $\rho$-factor of the chain $G_{\ell}=\left(V_{\ell},E_{\ell}\right)$, is denoted by $\rho\left(G_\ell\right)\in\mathbb{R}_{>0}$ and for the chain $G_{\ell}$ given by $1\rightarrow 2 \rightarrow\cdots\rightarrow k$ is computed as
\begin{align}
 \rho(G_\ell) &= \max_{a\in\{1,\dots,k\}}     \left(\frac{\sum_{j=1}^{a}~w_j}{\sum_{j=1}^{a}~P_j}\right)=\frac{\sum_{j=1}^{a^{\ast}}~w_j}{\sum_{j=1}^{a^{\ast}}~P_j},
\end{align} 
where $a^{\ast}\in V_{\ell}\subset\mathcal{N}(t)$ is called the CAV that determines the $\rho$-factor of the chain $G_\ell$.
\end{definition}

The interpretation of the CAV $a^{\ast}\in V_{\ell}$ in the above lemma is that the ratio of weight divided by processing time of the CAV in the chain $G_\ell$ is increasing from the first CAV in the chain until CAV $a^{\ast}$. 

\begin{lemma}\label{lem:rofactor}
If CAV $i\in V_{\ell}\subset\mathcal{N}(t)$ determines the $\rho$-factor of chain $G_\ell=(V_{\ell},E_{\ell})$ given by $1 \rightarrow 2 \rightarrow\cdots\rightarrow k$, $\ell\in\mathcal{L}$, then there exist an optimal decision sequence that processes CAVs $1, \dots, i$ one after another without any interruption by CAVs from other chains $G_{\ell^\prime},\ell^\prime \in \mathcal{L}\setminus \{\ell\}$.  
\end{lemma}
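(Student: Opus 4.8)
The plan is to prove the statement by a block–interchange argument on an optimal decision sequence, exploiting two elementary properties of the $\rho$-factor together with the classical weighted‑shortest‑processing‑time interchange rule for the single‑machine problem $1\,|\,\cdot\,|\,\sum w_j C_j$. Call a CAV \emph{foreign} if it does not belong to the chain $G_\ell=(V_\ell,E_\ell)$ given by $1\rightarrow 2\rightarrow\cdots\rightarrow k$. By the preceding lemma every CAV lies in exactly one chain and there is no precedence edge between a CAV of $G_\ell$ and a foreign CAV; moreover, since $G_\ell$ contains the edges $1\rightarrow\cdots\rightarrow k$, in every feasible sequence the CAVs $1,\dots,i$ appear in this relative order and $i+1,\dots,k$ appear after $i$, so the only CAVs that can separate two of $1,\dots,i$ are foreign ones. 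For a block $A$ of consecutively processed CAVs write $\rho(A)=\sum_{j\in A}w_j/\sum_{j\in A}P_j$. First I would record: because $i$ attains the maximum in the definition of $\rho(G_\ell)$, every prefix satisfies $\sum_{j=1}^{a}w_j\le\rho(G_\ell)\sum_{j=1}^{a}P_j$ with equality at $a=i$, so subtracting the inequality at $a-1$ from the equality at $i$ yields $\rho(\{a,a+1,\dots,i\})\ge\rho(G_\ell)$ for every $a\in\{1,\dots,i\}$; hence every initial run $1,\dots,a$ has ratio at most $\rho(G_\ell)$, and every suffix $a,\dots,i$ ending at $i$ has ratio at least $\rho(G_\ell)$. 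I would also use the interchange fact: if consecutive blocks $A$ (first) and $B$ satisfy $\rho(A)<\rho(B)$, swapping them strictly decreases $\sum w_j C_j$, and if $\rho(A)=\rho(B)$ the cost is unchanged; by disjointness of chains, swapping a foreign block past a $G_\ell$-block never violates the precedence graph $G$.

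Next I would fix an optimal sequence $s$ that, among all optimal sequences, minimizes the number $r$ of maximal foreign blocks lying strictly between the first and last of the CAVs $1,\dots,i$, and assume for contradiction $r\ge 1$. In $s$ the CAVs $1,\dots,i$ occur in chain order, so the portion of $s$ between CAV $1$ and CAV $i$ decomposes into alternating blocks $C_0,B_1,C_1,B_2,\dots,B_r,C_r$, where $B_1,\dots,B_r$ are the nonempty maximal foreign blocks and $C_0,\dots,C_r$ are the consecutive runs of $G_\ell$-CAVs, with $C_0$ beginning at CAV $1$ and $C_r$ ending at CAV $i$. By the two facts above, $\rho(C_0)\le\rho(G_\ell)\le\rho(C_r)$.

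The heart of the argument is a left‑to‑right scan of the pairs $(C_0,B_1),(B_1,C_1),(C_1,B_2),\dots,(B_r,C_r)$. If at some pair the left block has strictly smaller ratio than the right block, swapping them strictly lowers the cost, contradicting optimality of $s$. If at some pair the two ratios are equal, swapping keeps the cost the same but strictly decreases $r$ — either by merging two consecutive $G_\ell$-runs, or, at the two ends of the decomposition, by moving a foreign block out of the span (in front of CAV $1$ for the pair $(C_0,B_1)$, behind CAV $i$ for the pair $(B_r,C_r)$) — contradicting the choice of $s$; here one checks that the \emph{leftmost} foreign block may legally be placed before CAV $1$, since nothing foreign precedes it inside the span, so all of its chain‑predecessors already lie before CAV $1$, and symmetrically for the rightmost foreign block after CAV $i$. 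Therefore at every pair the left block must have strictly larger ratio than the right one, giving $\rho(C_0)>\rho(B_1)>\rho(C_1)>\cdots>\rho(B_r)>\rho(C_r)$, which contradicts $\rho(C_0)\le\rho(G_\ell)\le\rho(C_r)$. Hence $r=0$, i.e.\ in $s$ the CAVs $1,\dots,i$ are processed one after another without interruption by CAVs of other chains.

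I expect the main obstacle to be precisely the reason a naive ``peel a foreign block off the nearest end'' argument fails: the interior runs $C_1,\dots,C_{r-1}$ carry no ratio bound relative to $\rho(G_\ell)$, so one cannot in general slide an interrupting block directly to an end. The device that circumvents this is the global scan together with the straddling inequality $\rho(C_0)\le\rho(G_\ell)\le\rho(C_r)$, which forces a profitable — or cost‑neutral but $r$-reducing — adjacent swap somewhere along the chain of alternating blocks. The remainder is bookkeeping: checking that each swap respects the precedence graph $G$ (immediate from disjointness of chains, Assumption~\ref{assumption:Paths}) and that the chosen potential, first $r$ and then the cost $J^s$, genuinely improves.
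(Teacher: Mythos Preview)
Your argument is correct and follows the same contradiction-by-adjacent-interchange strategy that the paper defers to Pinedo's Lemmas~3.1.2--3.1.3 for; the paper's own proof is only a one-line pointer to that reference. Your version is self-contained and more explicit---in particular, the secondary minimization of the number $r$ of interrupting foreign blocks, combined with the straddle $\rho(C_0)\le\rho(G_\ell)\le\rho(C_r)$, is precisely the mechanism underlying Pinedo's Lemma~3.1.3, so the two proofs coincide in substance.
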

\begin{proof}
The proof is by contradiction and is similar to that of \cite[Lemma 3.1.3.]{pinedo2016scheduling} and follows from \cite[Lemma 3.1.2.]{pinedo2016scheduling} by using the results in which it is optimal to process the chain of jobs $1 \rightarrow 2 \rightarrow\cdots\rightarrow k$ before the chain of jobs $k+1 \rightarrow \cdots\rightarrow n$ if $\frac{\sum_{j=1}^{k}~w_j}{\sum_{j=1}^{k}~P_j} > \frac{\sum_{j=k+1}^{n}~w_j}{\sum_{j=k+1}^{n}~P_j}$.
\end{proof}

By our resequencing mechanism, at each instance of replanning, CAV $i$ accesses the coordinator and inquires the decision sequence computed using Algorithm \ref{Alg:Re-sequencing Algorithm}. 

\begin{algorithm}
\small	
 \caption{Re-sequencing Algorithm }
 \hspace*{\algorithmicindent} \textbf{Input:} All available chains $G_\ell=(V_{\ell},E_{\ell})$, $\ell \in \mathcal{L}$\\
 \hspace*{\algorithmicindent} \textbf{Output:} Decision Sequence $s =(s_1,s_2,s_3\dots,s_N)$
 \begin{algorithmic}[1]
 \While{$\bigcup_{l\in\mathcal{L}}G_l$ is not empty}
\State {Update $a_\ell$ and $\rho \left(G_\ell\right)$ for all $G_\ell=(V_{\ell},E_{\ell})$, $\ell \in \mathcal{L}$}
\State{$\rho_{\max},\ell^{\ast},a^{\ast}_\ell \gets 0$}
\For{$\ell \in \mathcal{L}$}
\If{$\rho_{\max} < \rho\left(G_\ell\right)$}
\State{}\Comment{Find maximum $\rho$-factor among all chains, the corresponding chain and CAV}
\State{$\rho_{\max} \gets \rho\left(G_\ell\right)$~;~$\ell^{\ast} \gets \ell$~;~$a^{\ast}_\ell \gets a_\ell$} 
\EndIf
\EndFor
\State{$subSequence$ $\gets \emptyset$} 
\While{\texttt{True}}
\If { $\exists~i\in V_{\ell^{\ast}}$ such that $(i,a^{\ast}_\ell) \in E_{\ell^{\ast}}$}
\State{$subSequence$.PushFront($i$) }\Comment{Add $i$ to the front of the subsequence}
\State{Remove $(i,a^{\ast}_\ell)$ from $E_{\ell^{\ast}}$}
\State{Remove $i$ from $V_{\ell^{\ast}}$}
\State{$a^{\ast}_\ell \gets i$}
\Else 
\State{\texttt{Break}}
\EndIf
\EndWhile
\State{$sequence$.PushBack($subSequence$)}\Comment{{Add subsequence to the back of the sequence}}
\EndWhile
\State{\Return $sequence$}
\end{algorithmic} \label{Alg:Re-sequencing Algorithm}
\end{algorithm}

\begin{theorem}
Under Assumption \ref{Alg:Re-sequencing Algorithm}, the decision sequence of $N(t)$ CAVs, which is the optimal solution to $(1~|~G~|~\sum\limits_{i=1}^{N(t)} w_i~C_i^s)$, is computed using Algorithm \ref{Alg:Re-sequencing Algorithm}.
\end{theorem}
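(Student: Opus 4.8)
The plan is to establish the theorem by induction on the number of chains (equivalently, on the number of iterations of the outer \texttt{while} loop), reducing at each step to the classical single-machine result for chains. First I would restate the standard fact, which underlies Lemma \ref{lem:rofactor}: for the problem $(1 \mid \text{chains} \mid \sum w_i C_i)$, an optimal schedule is obtained by repeatedly selecting, among all currently-available chains, the one whose $\rho$-factor is largest, and scheduling its initial segment up to (and including) the CAV $a^{\ast}$ that attains that $\rho$-factor, contiguously. This is essentially \cite[Theorem 3.1.4 / Algorithm 3.1.5]{pinedo2016scheduling}, and Lemma \ref{lem:rofactor} already gives the crucial exchange argument: there is an optimal sequence that processes $1,\dots,i$ (where $i=a^{\ast}$) consecutively without interruption from other chains.

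Next I would argue that Algorithm \ref{Alg:Re-sequencing Algorithm} implements exactly this greedy rule. The key steps are: (i) the inner \texttt{for} loop computes $\rho_{\max}=\max_{\ell} \rho(G_\ell)$ together with the chain $\ell^{\ast}$ and the determining CAV $a^{\ast}_\ell$; (ii) the inner \texttt{while} loop walks backward along the edges of $G_{\ell^{\ast}}$ from $a^{\ast}_\ell$ to the head of the chain, using \texttt{PushFront} so that the extracted \texttt{subSequence} comes out in the correct forward order $1 \rightarrow 2 \rightarrow \cdots \rightarrow a^{\ast}$, and removes these vertices and edges from $G_{\ell^{\ast}}$; (iii) \texttt{PushBack} appends this block to the global \texttt{sequence}. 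By Lemma \ref{lem:rofactor}, placing this block next (immediately after whatever has already been scheduled) is consistent with some optimal sequence, so no optimality is lost. I would also note the easy fact, used implicitly, that after removing an initial segment of a chain the remainder is again a chain, so the residual instance $\bigcup_{\ell} G_\ell$ (with the shortened chains) is again an instance of the same problem type — this is where the disjoint-chains structure guaranteed by Lemma 1 (Assumption \ref{assumption:Paths}) is essential, since it means removing a prefix of one chain leaves all other chains untouched.

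The induction then closes: let $S$ be the block chosen in the first iteration and $s'$ the sequence the algorithm returns on the residual instance; by the inductive hypothesis $s'$ is optimal for the residual problem, and by Lemma \ref{lem:rofactor} there is a globally optimal sequence beginning with $S$, whose tail must be optimal for the residual problem — hence $S$ followed by $s'$ is globally optimal, which is precisely the output of the algorithm. I would finish with a brief termination/well-definedness remark: each outer iteration removes at least one CAV (the determining CAV $a^{\ast}$, and possibly more) from $\bigcup_\ell G_\ell$, so the loop runs at most $N(t)$ times and every CAV is placed in \texttt{sequence} exactly once.

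The main obstacle I anticipate is not the induction skeleton but making the reduction airtight: specifically, verifying that the $\rho$-factors and determining CAVs of the \emph{other} chains are genuinely unchanged when a prefix of $G_{\ell^{\ast}}$ is deleted (immediate from disjointness), and — more delicately — confirming that the greedy choice remains valid at every subsequent stage, i.e. that after committing to $S$ we may treat the remaining CAVs as a fresh instance with the same weights and processing times. This requires observing that $C_i^{s}$ for CAVs after the block $S$ equals $\bigl(\sum_{j\in S} P_j\bigr) + C_i^{\text{residual}}$, a constant offset that does not affect which residual sequence is optimal. A secondary point worth a sentence is that Lemma \ref{lem:rofactor} only asserts \emph{existence} of an optimal sequence with the stated contiguity property; chaining these existence statements across all iterations (to conclude the fully-greedy sequence is optimal) is the part that genuinely needs the inductive argument rather than a one-shot exchange.
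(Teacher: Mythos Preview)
Your proposal is correct and follows essentially the same approach as the paper: both arguments identify the chain with maximum $\rho$-factor, invoke Lemma~\ref{lem:rofactor} (and the underlying \cite[Lemma 3.1.2]{pinedo2016scheduling}) to justify scheduling its initial segment up to $a^{\ast}$ contiguously, and then iterate on the residual instance. Your version is in fact more careful than the paper's, which leaves the induction and the constant-offset reduction to the residual problem implicit; you make both explicit, and your termination remark and observation about disjointness preserving the other chains' $\rho$-factors are exactly the bookkeeping the paper glosses over.
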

\begin{proof}
Among all the disjoint chains of all paths, let $\rho_{\max}$ and $a^{\ast}_\ell$ be the maximum $\rho$-factor and the corresponding CAV determining it, respectively. Namely, $\ell^{\ast}$ is the associated path with $\rho_{\max}$ and $a^{\ast}_\ell$. For every $\ell \in \mathcal{L}\setminus\{\ell^{\ast}\}$, we have
\begin{equation} \label{eq:rho_max}
    \rho_{\max} >  \dfrac{\sum_{j\in V_{\ell}}^{a_{\ell}}~w_j}{\sum_{j\in V_{\ell}}^{a_{\ell}}~P_j},
\end{equation}
where $a_{\ell}$ is the CAV determining the $\rho$-factor of a chain $G_{\ell}=\left(V_{\ell},E_{\ell}\right)$ (lines (3)-(9) in the algorithm).
From \eqref{eq:rho_max} and \cite[Lemma 3.1.2.]{pinedo2016scheduling}, the chain $G_{\ell^\ast}$, should be processed first. From Lemma \ref{lem:rofactor}, all the CAVs in the chain $G_{\ell^\ast}$, should be processed until CAV $a^{\ast}_\ell$ one after another without any interruption by CAVs from other chains ({lines (11)-(20) in the algorithm}). All processed CAVs get removed from their corresponding chain (lines (14) and (15) in the algorithm), and then the process will be repeated until no CAVs remained unprocessed. 
\end{proof}

\section{Simulation Results}\label{sec:Results}
We evaluate the effectiveness of our framework in simulation through several scenarios. In all scenarios, we consider CAVs entering the control zone from six different paths shown in Fig. \ref{fig:intersection}, where the length of control zone for straight and turning paths are $212$ m and $215$ m, respectively. The CAVs enter the control zone with initial speed uniformly distributed between $12$ m/s to $17$ m/s from each entry with equal traffic volumes varying from $800$ to $2400$ veh/h. Videos from our simulation analysis can be found at the supplemental site, \url{https://sites.google.com/view/ud-ids-lab/RPRS}.

For the first scenario, we demonstrate the effects of replanning mechanism to respond to the deviations from the previous planned trajectory. We consider $24$ CAVs entering the control zone with the rate of $2,400$ veh/h per path, and replanning occurs every time the new CAVs enter the control zone. To only consider the effects of replanning, we set the decision sequence of CAVs at each replanning instance to be based on FCFS sequence. 
At each replanning instance, we consider CAVs observe their current position and speed with deviation uniformly distributed in the range of $[-2,2]$ m and $[-0.2,0.2]$ m/s, respectively. The position trajectory of CAVs traveling from westbound to eastbound is visualized in Fig. \ref{fig:TvPError} in the presence of deviation. The CAVs' positions in this path is denoted by a solid line, while their corresponding rear-end constraints are shown with a dashed line. The CAVs from other paths that have the potential for lateral collision with CAVs in this path are shown with a square and vertical bar showing their safety time headway. Figure \ref{fig:TvPError} shows that by using our replanning framework, CAVs respond to the observation made at each replanning point, and they adjust their trajectory to ensure safety.

\begin{figure}
    \centering
\includegraphics[width=0.9\linewidth]{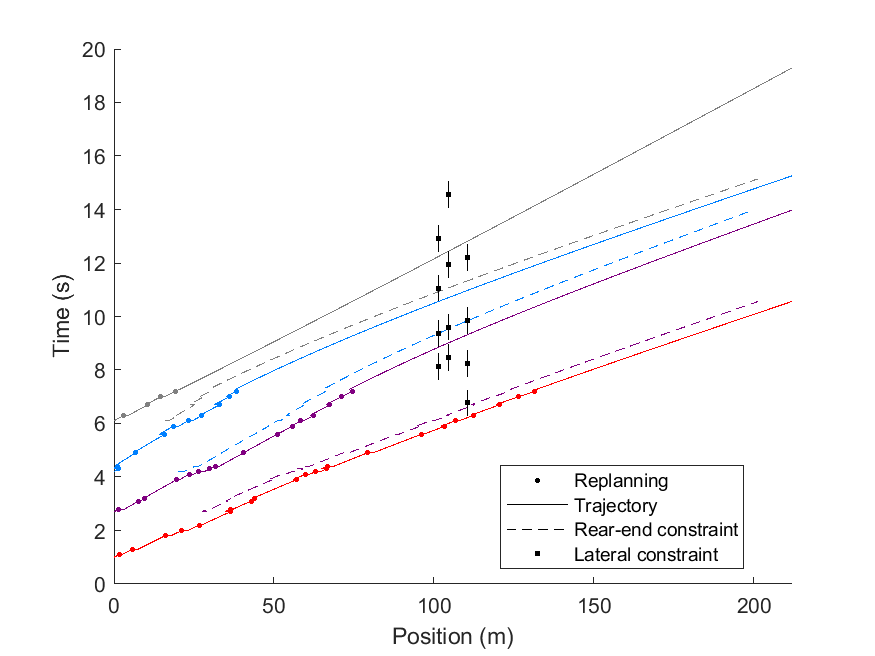}  \caption{Time vs position in the presence of deviation.}
    \label{fig:TvPError}
\end{figure}

For the second scenario, we show the change in average travel time of CAVs within our proposed framework compared to the baseline case for different traffic volumes $2,400$ veh/h and $1,200$ veh/h per path. In the baseline case, CAVs only solve their optimization problem upon entering the control zone based on FCFS, while in our proposed framework, CAVs replan based on the new decision sequence as a new CAV enters the system. In this scenario, we assumed that all CAVs have the same weights, and for each traffic volume, we performed $30$ simulations with different random seeds. The results are presented in Fig. \ref{fig:avgTravelTime}, and it can be seen how resequencing CAVs affects the average travel time. As the traffic flow increases, the change in average travel time varies more, highlighting the importance of decision sequence in influencing the traffic throughput.

\begin{figure}
    \centering
\includegraphics[width=0.92\linewidth]{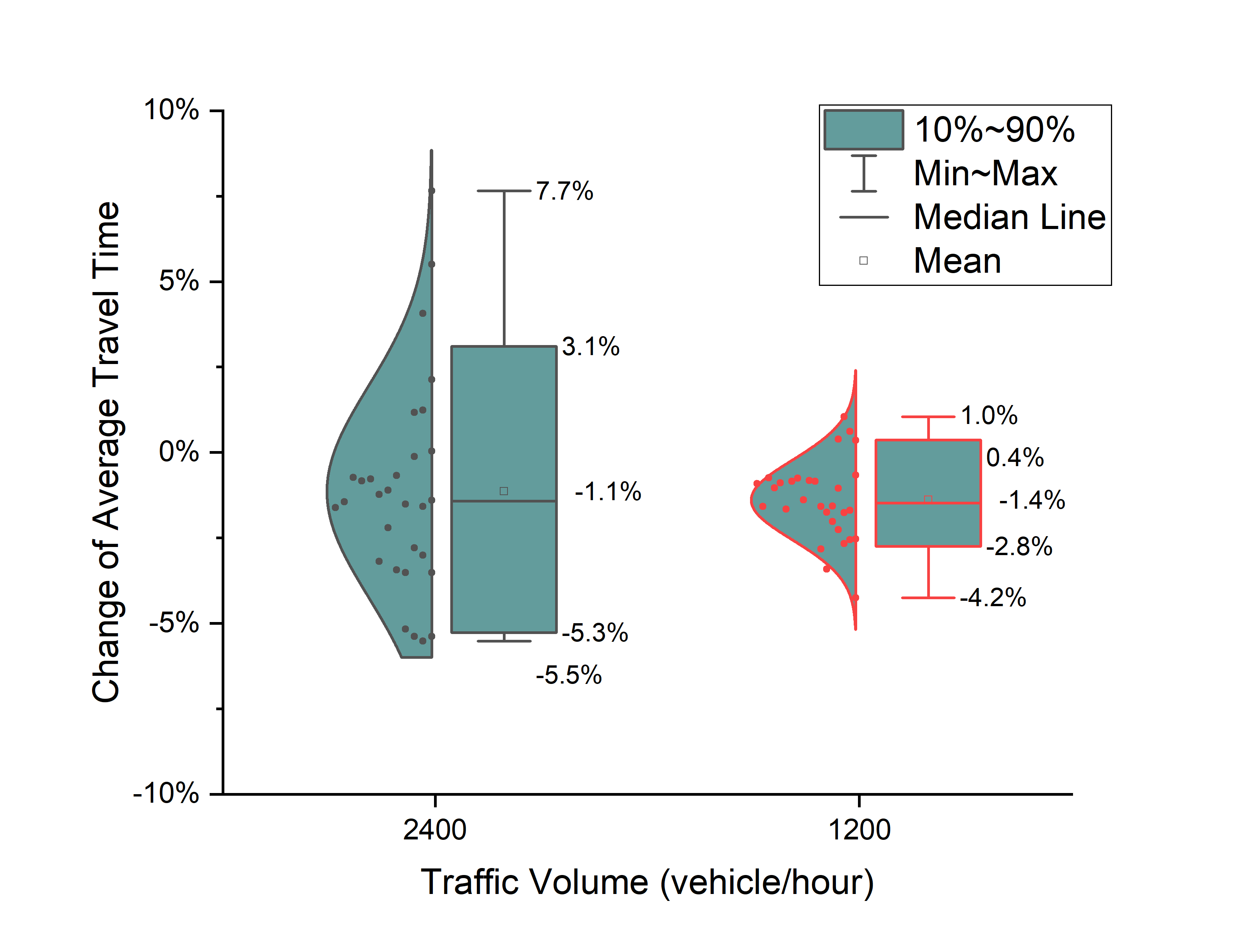}   \caption{Change in average travel time compared to the FCFS decision sequence for different traffic volumes.}
    \label{fig:avgTravelTime}
\end{figure}

For our last scenario, we demonstrate the change in weighted average travel time of CAVs within our proposed framework compared to the baseline case, at different traffic volumes $2,400$ veh/h, $1,200$ veh/h, and $800$ veh/h in Fig. \ref{fig:avgTravelTimediffWeights}.
Similar to the previous scenario, in our proposed framework, CAVs replan based on the new decision sequence as a new CAV enters the system. We performed $30$ simulations with different random seeds for each traffic density. In this scenario, we consider that all CAVs' weights are inversely proportional to the size of their compact set of the feasible solution. 
After performing $30$ different simulations for each traffic flow, our resequencing framework based on the minimum exit time is shown to improve the travel time on average by about $2$\%. 

It should be noted that in 2017, congestion in urban areas across the U.S. led to drivers collectively spending an extra 8.8 billion hours on the road and purchasing an additional 3.3 billion gallons of fuel, ultimately resulting in a \$166 billion expense \cite{Schrank2019}. Thus, $2\%$ improvement on average travel time in the scale of transportation network by only changing the decision sequence can be quite substantial.
Additionally, the main benefit of our approach lies in providing a systematic framework to relax the FCFS sequence in decision making. This would be useful if one needs to prioritize some CAVs to other CAVs, such as giving higher priority to vehicles with higher passenger capacity or emergency vehicles.

By formulating the resequencing problem as a scheduling problem, we find the optimal solution to the scheduling problem. However, this optimal schedule is not the optimal solution, which minimizes the average of actual travel time of all CAVs. The actual travel time of CAVs depends on the decision sequence order, and finding this optimal sequence of decision making is a combinatorial problem, which is an NP-hard problem \cite{de2013autonomous}. However, the algorithm employed in this paper depends on a simple sort which can be done in $O(n\log(n))$ \cite{pinedo2016scheduling}. Thus, if the sole purpose is to improve the average travel time of all CAVs, one can find the decision sequence using our proposed framework and compare it with FCFS policy, and only choose the decision sequence based on the minimum exit time if it improves the performance.

\begin{figure}
    \centering
\includegraphics[width=0.92\linewidth]{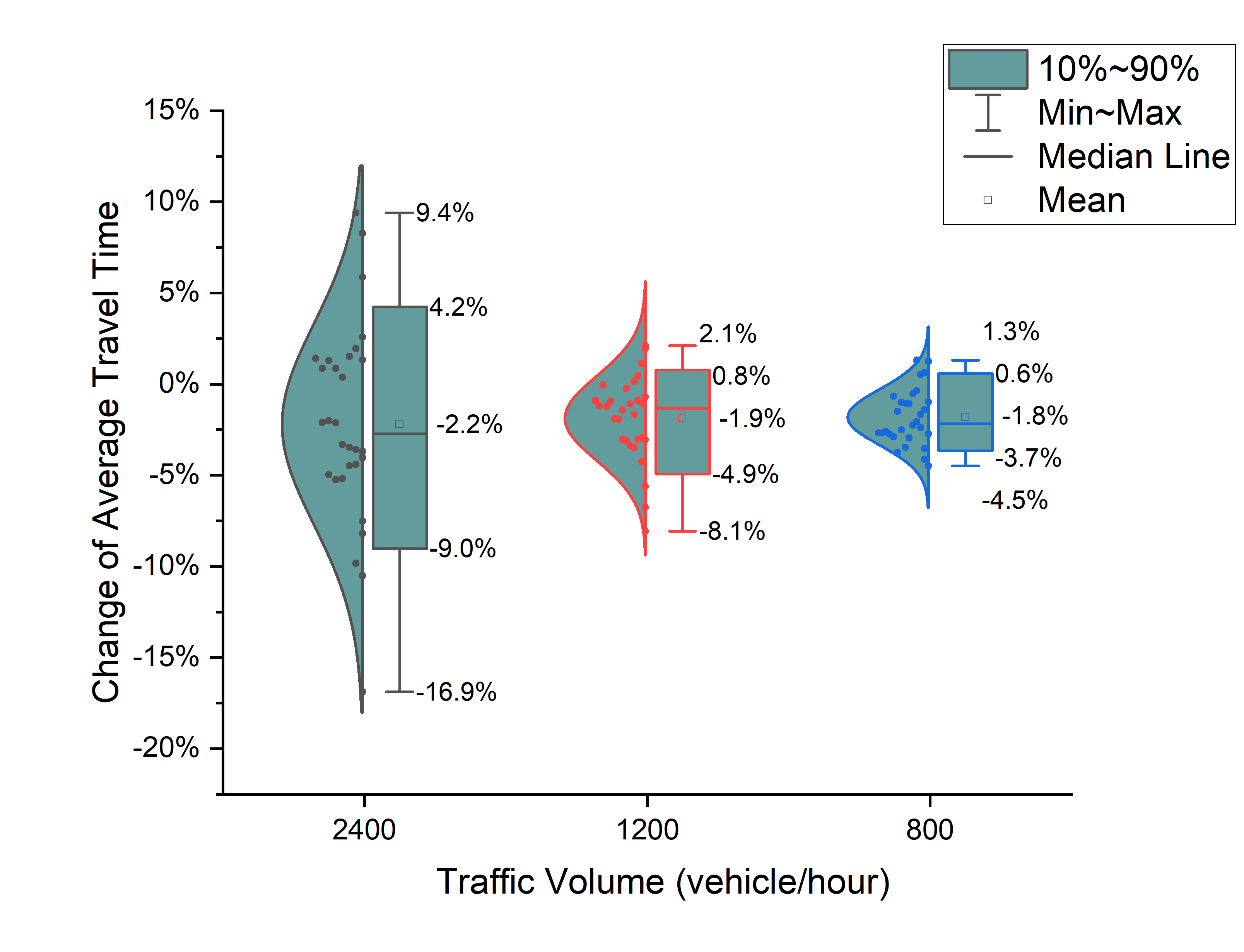}   \caption{Change in weighted average travel time compared to the FCFS decision sequence for different traffic volumes.}
    \label{fig:avgTravelTimediffWeights}
\end{figure}

\section{Concluding Remarks and Discussion} \label{sec:Conclusion}

In this paper, we extended the framework we developed earlier \cite{Malikopoulos2020} for coordination of CAVs at a signal-free intersection to integrate replanning in a time-driven or event-driven manner. This embedded replanning aims at introducing indirect feedback into the coordination framework to respond to the unexpected changes in the system to some extent. Using the theory of the job-shop scheduling problem, we further enhanced our decentralized coordination framework by introducing a priority-aware resequencing mechanism, which designates the order of decision making.
The work that we reported on this paper advances the state of the art in a way that relaxes the FCFS decision making sequence of the CAVs.
Moreover, in our resequencing framework, we can have different weights representing the priorities for CAVs based on the application. In this paper, we chose the weights of the CAVs to be inversely proportional to the size of their compact set of feasible solutions. We finally demonstrated the effectiveness of our proposed approach through several numerical simulations.

\bibliographystyle{IEEEtran.bst} 
\bibliography{reference/IDS_Publications_11122021.bib, reference/ref.bib}

\end{document}